\documentclass[letterpaper, 10 pt, conference]{ieeeconf}  

\IEEEoverridecommandlockouts                              
\usepackage{graphics, graphicx} 
\usepackage{epsfig} 
\usepackage{mathptmx} 
\usepackage{times} 
\usepackage{amsmath, amscd, amsfonts} 
\usepackage{amssymb}  

\usepackage[colorlinks=true, linkcolor=black, urlcolor=blue, citecolor=black]{hyperref}
\usepackage{cleveref}
\usepackage{xcolor,color}
\usepackage{tikz}
\usepackage{pgfplots}

\usepackage{placeins}    
\usepackage{comment}

\newtheorem{theorem}{Theorem}[section]

\newtheorem{lemma}[theorem]{Lemma}

\newtheorem{remark}[theorem]{Remark}
\newtheorem{assumption}[theorem]{Assumption}

\newenvironment{customproof}[1][Proof]
{\par\noindent\textit{#1}\ignorespaces}
{\hfill$\blacksquare$\par}

\pgfplotsset{compat=1.18}
\begin{document}
	
\title{\bf Predictor-Feedback Stabilization of Linear Switched Systems with State-Dependent Switching and Input Delay$^*$}
	
\author{
Andreas Katsanikakis$^{1}$, 
Nikolaos Bekiaris-Liberis$^1$ and 
Delphine Bresch-Pietri$^2$
\thanks{$^*$Funded by the European Union (ERC, C-NORA, 101088147). Views and opinions 
expressed are however those of the authors only and do not necessarily reflect those of the 
European Union or the European Research Council Executive Agency. Neither the European 
Union nor the granting authority can be held responsible for them.}
\thanks{$^1$The authors are with the Department of Electrical and Computer Engineering, 
Technical University of Crete, Chania, 73100, Greece. Emails: akatsanikakis@tuc.gr and nlimperis@tuc.gr.}
\thanks{
$^2$The author is with MINES ParisTech, PSL Research University CAS-Centre Automatique et Systemes, 60 Boulevard Saint Michel 75006, Paris, France. Email: delphine.bresch-pietri@minesparis.psl.eu.
}
}
\maketitle
\setlength{\parskip}{0pt}
\begin{abstract}
We develop a predictor-feedback control design for a class of linear systems with state-dependent switching. The main ingredient of our design is a novel construction of an exact predictor state. Such a construction is possible as for a given, state-dependent switching rule, an implementable formula for the predictor state can be derived in a way analogous to the case of nonlinear systems with input delay. We establish uniform exponential stability of the corresponding closed-loop system via a novel construction of multiple Lyapunov functionals, relying on a backstepping transformation that we introduce. We validate our design in simulation considering a switching rule motivated by communication networks. 
\end{abstract}

\setcounter{footnote}{1}
    
\section{Introduction} \label{sec:intro}
Systems with state-dependent switching and input delay appear in various applications. Among different potential application examples we discuss the following three. In {vehicle control}, state-dependent switching occurs between throttle and braking operations depending on speed/acceleration states, while input delays appear due to engine and other actuator dynamics, see, e.g., \cite{IOANNOY}, \cite{ACC}. When modeling epidemic spreading dynamics, state-dependent switching may appear due to external measures imposition, such as quarantine measures, implemented depending on, e.g., the number of infected individuals, while at the same time, input delays may arise due to lags between implementation of policymaker strategies and their effect on epidemic spreading, see, e.g., \cite{SIS_M}, \cite{SIR_D}, \cite{Stability_SIS}, \cite{CompartEpidemi}. In networked control systems, state-dependent switching originates from communication protocols and scheduling rules, while delays appear due to, e.g., communication constraints \cite{NetworkCom}, \cite{liu2015networked}. These examples highlight the fact that simultaneous presence of state-dependent switching and input delay is common in practice, motivating the need for development of control design methods for this class of systems. 

The majority of related existing results address either input delays in systems with time-dependent switching, or state delays in systems with state-dependent switching. For the former case, there exist methods aiming at input delay compensation relying, for example, on construction of Linear Matrix Inequalities (LMIs) \cite{FreqControlPowerSys}, \cite{Lin_discrete}, \cite{Xi_Corr}, \cite{Kp}, \cite{Switch_Communication}, \cite{B}, \cite{AHMED}, \cite{LIMA} and Lyapunov-Krasovski functionals \cite{Yue_Kras}, \cite{Mazenc_LKF}, or on the truncated predictor method \cite{Wu_Truncated}, \cite{Sakthi_Truncated}, \cite{C}. However, in these works the delay length or the system dynamics are typically restricted. Systems with state delays and state-dependent switching, which, in general, require different treatment than the case of input delay, are addressed in \cite{SCL_2023}, \cite{Aut2025}. Moreover, input delay of arbitrary length is compensated developing predictor-based designs in \cite{Chattering_DBP}, \cite{KONG_Stochastic}, which address systems with switched delays rather than switched dynamics, and in \cite{my_ECC}, \cite{my_TDS} that consider time-dependent switching. Predictor-based control of discrete-time systems when the switching signal itself is the control input are addressed in \cite{Seuret}, which, although related, it is a different problem. To the best of our knowledge, there exists no result addressing the problem of (long) input delay compensation for linear switched systems with state-dependent switching via predictor feedback.

In this paper we develop a new predictor-feedback control law for a class of linear switched systems with input delay, in which the switching signal depends on the state. For a given switching rule, the main element of our predictor-feedback control design is a novel construction of an exact predictor state, which enables complete delay compensation. Differently from our previous works \cite{my_ECC}, \cite{my_TDS}, here an exact predictor state construction is possible because, as the switching rule depends on the state, an implicit, implementable formula for the predictor state can be derived viewing the given switched system as a nonlinear system. We thus provide such a predictor state construction and we furthermore derive an alternative, semi-explicit formula for its computation, capitalizing on the fact that, at each mode, the system dynamics are linear. We establish uniform exponential stability of the closed-loop system via construction of multiple Lyapunov functionals relying on a backstepping transformation that we introduce. We validate the control design’s performance in simulation using a two-mode system with a switching rule whose form is motivated by communication network applications.

The remainder of the paper is organized as follows. Section~\ref{sec2} presents the class of switched systems with input delay considered and the predictor-feedback control design. Section~\ref{sec: stab analysis} provides the stability analysis, where the main result is presented. Section~\ref{sec: sims} illustrates the theoretical results through a simulation example. Finally, Section~\ref{sec5} provides concluding remarks.

\section{Problem Formulation and Control Design} \label{sec2}
\subsection{Switched Linear Systems With Input Delay and State-Dependent Switching}  \label{sec2A}

We consider the following linear switched system with a constant delay in the input
\begin{equation}\label{1.1}
\dot{X}(t) = A_{\sigma(X(t))} X(t) + B_{\sigma(X(t))} U(t-D), 
\end{equation}
where $X\in \mathbb{R}^n$ is the state, $U \in \mathbb{R}$ is the control input, $D>0$ is an arbitrarily long delay, and $t \geq0$ is the time variable. The switching signal $\sigma : \mathbb{R}^n \to \mathcal{P}$ maps the state to a finite set of modes $\mathcal{P} = \{1, 2, \dots, p\}$. The state space is partitioned into a finite collection of disjoint regions $\{ \Omega_1, \Omega_2, \dots, \Omega_p \}$ such that
$\bigcup_{i=1}^p \Omega_i \;  = \mathbb{R}^n , \ \Omega_i \cap \Omega_j = \varnothing, \; \forall i \neq j,$
where each region $\Omega_i$ corresponds to a distinct mode of the switched system, i.e., $
    \sigma(X) = i \ \text{ if and only if } \ X \in \Omega_i$. Furthermore, we define $\Omega_{j,i} = \Omega_{i,j} \subseteq \Omega_i \cup  \Omega_j$ as the switching surfaces between adjacent regions.
    
We impose the following assumptions on the delay-free system. The first guarantees well-posedness of both the open- and closed-loop systems and the second implies that the delay-free system can be stabilized by a nominal control law of the form $U=K_{\sigma}X$.

\begin{assumption}\label{assumption1}
For any initial condition $X_0\in\mathbb{R}^n$ and any disturbance
$d \in L^2_{\mathrm{loc}}([0,\infty);\mathbb{R})$, the systems
\begin{align}
  \dot X(t) &= \big(A_{\sigma(X(t))}+B_{\sigma(X(t))}K_{\sigma(X(t))}\big)X(t), \label{open control loop} \\
  \dot X(t) &= A_{\sigma(X(t))}X(t) + B_{\sigma(X(t))}d(t), \label{open loop dist}
\end{align}
have a unique, absolutely continuous solution, in Carath\'eodory sense\footnote{See, e.g., \cite{contraction} and references therein, for specific conditions that guarantee this.} defined on $\mathbb{R}_+$. Moreover, the corresponding switching signal $\sigma(X(t))$ is piecewise constant and exists for all non-negative times (thus it does not exhibit Zeno behavior).
\end{assumption}

\begin{assumption}\label{delay-free as}
There exist families of vectors $K_i^{\rm T}$, of symmetric matrices $P_i$, $Q_i$, such that $Q_i>0$, and of positive constants $\alpha_i, \beta_i$,  such that the following hold

    \begin{equation}
        \alpha_i |X|^2 \le X^\top P_i X \le \beta_i |X|^2,\qquad \forall X\in\Omega_i,
    \end{equation}
    \begin{align}\label{mean_delayf_stabilityy}
        &X^T \left[\left({A_i} + {B_i} {K_i}\right)^T P_i + P_i \left({A_i} + {B_i} {K_i}\right) + Q_i \right] X \leq 0, \notag \\ &\forall  X \in \Omega_i, 
    \end{align}
    \begin{equation}\label{eq:exact_quadratic_switching}
            X^T (P_j - P_i) X = 0, \quad \forall X \in \Omega_{ij},
    \end{equation}
for all $i,j \in \mathcal{P}$.
\end{assumption}

\begin{remark}\label{remark}
Assumption~\ref{assumption1} guarantees well-posedness of both the open- and closed-loop systems, when employing predictor-feedback, under the given state-dependent switching law $\sigma(X)$. It also excludes the possibility of the appearance of Zeno behavior. 
Note that, for the considered partition of the state space, sliding phenomena are also excluded, while chattering phenomena are not. Such phenomena are undesirable for control implementation purposes, but they can be practically avoided via employing hysteresis switching (see e.g., \cite{Liberzon}, \cite{Antsaklis}).
Moreover, Assumption~\ref{delay-free as} guarantees that one can construct multiple, quadratic Lyapunov functions to study stability of the nominal (in the delay-free case) closed-loop system (\ref{open control loop}), as in, e.g., \cite{Liberzon}, \cite{Antsaklis}. However, condition (\ref{eq:exact_quadratic_switching}) is more restrictive as it imposes continuity at the switching surfaces of the respective Lyapunov functions. This is consistent with our setup in which $\sigma$ depends only on $X$ and where the $\Omega_i $ sets are such that $\Omega_{i,j}=\Omega_{j,i}$, while belonging to some $\Omega_i \,(\text{resp. }\Omega_j)$. Both assumptions may be restrictive, nevertheless, they are necessary here, in order to properly define the predictor state and to carry out the stability analysis, without the need to employ a hybrid systems framework.


\end{remark}

\subsection{Predictor-Feedback Control Design And Computation}

  The values of the predictor are the future values of the system state, and since the switching logic is state-dependent, we face a situation analogous to a non-linear system with input delay, where the predictor state formula is implicit, see, for example, \cite{MirkinKrsticBook}, \cite{MirkinKrstic2010}, and \cite{KarafyllisKrstic2017}. The reason for this is that the system parameters ($A_\sigma$, $B_\sigma$) depend on the switching mode, which in turn depends on the future state. Hence, the predictor state satisfies an implicit integral equation given by
\begin{equation}\label{P_theta}
    P(\theta) = X(t) + \int_{t-D}^{\theta} \left[ A_{\sigma(P(s))} P(s) + B_{\sigma(P(s))} U(s) \right] ds,
\end{equation}
for $t-D \leq \theta \leq t$.
Note that this definition is implicit because the switching mode \(\sigma(P(\theta))\) depends explicitly on the unknown future predictor state \(P(\theta)\). 
Having defined the predictor state \(P(t)\), we now propose the predictor-feedback control law
\begin{equation}\label{controller}
    U(t) = K_{\sigma(P(t))} P(t).
\end{equation}

Even though equation (\ref{P_theta}) defines the predictor implicitly, we can still obtain a more explicit formula for implementation, capitalizing on the fact that, at each given mode, the system is linear. 
To see this, since $\sigma$ is {state dependent}, the next switching instant is the first time the predictor trajectory leaves the current region. Hence, setting $m_1=\sigma(X(t))$ and defining recursively
\begin{equation}
    s_i \;=\; \inf\{\, s>s_{i-1}:\; P(t-D+s)\notin \Omega_{m_i}\,\},\quad i=1,2,\dots,k,
\end{equation}
where $k \in  \mathbb{N}_0$ is the number of switching instances within interval $[t,t+D)$ and  $m_i \in \mathcal{P}$ is the active mode on $[t+s_{i-1},t+s_i)$, by Assumption~\ref{assumption1}, this process leads to a finite switchings sequence 
\begin{equation}\label{s sequence}
    0=s_0 < s_1 < \cdots < s_k < s_{k+1}=D ,
\end{equation}
with $\sigma(P(\theta))\equiv m_i$ on
\begin{equation}\label{theta subs}
    t-D+s_{i-1} \le \theta < t-D+s_i .
\end{equation}
Hence, for each subinterval (\ref{theta subs}), we can explicitly solve \eqref{P_theta} as 
    \begin{align}\label{P_theta1}
        P(\theta) &= e^{{A_{m_i}}(\theta-t+D-s_{i-1})} X(t+s_{i-1}) \notag \\
        &\quad + \int_{t-D+s_{i-1}}^{\theta} e^{{A_{m_i}}(\theta - s)} {B_{m_i}}U(s) d s.
    \end{align}
Calculating \eqref{P_theta1} at $\theta=t-D+s_i$ gives 
\begin{align}\label{eq:segment}
X(t+s_i)&=e^{A_{m_i}(s_i-s_{i-1})}X(t+s_{i-1}) \notag \\
&\quad+\int_{t-D+s_{i-1}}^{t-D+s_i}e^{A_{m_i}(t-D+s_i-s)}B_{m_i}U(s)\,ds. 
\end{align}
Setting $i=k+1$ gives $X(t+s_{k+1})=X(t+D)=P(t)$. Hence, from (\ref{eq:segment}) we get the exact predictor formula at $t$, as
\begin{align}\label{P(t)}
    P(t) &= \prod_{n=1}^{k+1} e^{A_{m_n}(s_n-s_{n-1})}X(t) + \sum_{n=1}^{k+1} \left( \prod_{j=n}^{k} e^{A_{m_{j+1}}(s_{j+1}-s_{j})}  \right. \notag \\
         &\qquad \left. \times \int_{t-D+s_{n-1}}^{t-D+s_n} e^{A_{m_n}(t-D+s_n-\theta)} B_{m_n} U(\theta) d\theta \right).
\end{align}  

Although (\ref{P(t)}) provides the exact predictor expression, its implementation in each $t$ requires stepwise computations for each $\theta$-interval defined in (\ref{theta subs}) and detection of mode transitions. Hence, the computation of $P$ is in fact semi-explicit rather than fully explicit. In a practical implementation of (\ref{P(t)}), one could proceed in the following, algorithmic manner. Start from the available measurement $P(t-D)=X(t)$ in mode $m_1$, 
compute $P(\theta)$ via \eqref{P_theta1} until the trajectory leaves $\Omega_{m_1}$ at 
$\theta = t-D+s_1$ (this event can be detected at current time $t$ using the computed values of $P(\theta)$, which depend on $X(t)$ and $U(s), \,s\in[t-D,t]$), then switch to $m_2$ and repeat. If no exit occurs, then $k=0$ and 
\eqref{P(t)} becomes a single–mode expression. Once the process terminates, the 
sequence \eqref{s sequence} is fully determined and the semi-explicit formula 
\eqref{P(t)} can be implemented directly. This scheme may be computationally
advantageous when a minimum dwell time can be determined a priori,
depending on the shape of the $\Omega_i$ sets and the linearity of the
dynamics, allowing to derive a growth bound. This minimum dwell time 
may allow to rely on (\ref{P_theta1}) to fasten intermediate computations, without 
the need of checking for mode transitions at all sampling times.
\section{Stability Analysis}\label{sec: stab analysis}
Having developed the predictor-feedback control law, we are now ready to present our main result.
\begin{theorem}\label{Exponential Stability}
Consider the closed-loop system (\ref{1.1}) with the controller (\ref{controller}). Under Assumptions~\ref{assumption1} and~\ref{delay-free as}, for all $X_0\in\mathbb{R}^n$, $U_0\in L^2[-D,0]$ there exist positive constants $\rho,\xi$ such that the following holds
\begin{align}\label{stability equation}
\left| X(t) \right| + \sqrt{\int_{t-D}^{t} U(\theta)^2 d\theta} &\leq \rho \left( \left| X(0) \right| + \sqrt{\int_{-D}^{0} U(\theta)^2 d\theta} \right) \notag \\& \qquad \times e^{-\xi t}, \quad t \geq 0. 
\end{align}
\end{theorem}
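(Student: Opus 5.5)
The proof follows the standard predictor-feedback route, adapted to switching. We represent the input delay as a transport PDE, introduce a backstepping transformation built from the predictor, and close the argument with a multiple Lyapunov functional whose quadratic part switches with the mode.

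\textbf{Transport PDE representation.} Write $u(x,t)=U(t+x-D)$ for $x\in[0,D]$, so that $u_t=u_x$, $u(D,t)=U(t)$, and (\ref{1.1}) reads $\dot X=A_{\sigma(X)}X+B_{\sigma(X)}u(0,t)$. In the same variables the predictor (\ref{P_theta}) becomes $p(x,t)=P(t+x-D)$. It satisfies
\begin{equation*}
p(x,t)=X(t)+\int_0^x\big[A_{\sigma(p(y,t))}p(y,t)+B_{\sigma(p(y,t))}u(y,t)\big]dy .
\end{equation*}
By Assumption~\ref{assumption1}, applied with $d=u(\cdot,t)$, this solution exists, is unique, and its switching is piecewise constant. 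We also get $p(x,t)=X(t+x)$ along closed-loop trajectories.

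\textbf{Backstepping transformation and target system.} Define $w(x,t)=u(x,t)-K_{\sigma(p(x,t))}p(x,t)$. The controller (\ref{controller}) gives $w(D,t)=0$. Because $p(x,t)$ is the future state, $K_{\sigma(p)}p$ evaluated at $(x,t)$ is a function of $t+x$ only, so $w_t=w_x$ holds in the a.e./weak sense despite the switches in $\sigma$. Meanwhile $\dot X=(A_\sigma+B_\sigma K_\sigma)X+B_\sigma w(0,t)$.

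The inverse transformation comes from solving the implicit equation for $p$ in terms of $w$:
\begin{equation*}
p(x,t)=X(t)+\int_0^x\big[(A+BK)_{\sigma(p)}p+B_{\sigma(p)}w\big]dy .
\end{equation*}
Since there are finitely many modes, Gronwall applied on each subinterval gives $|p(x,t)|\le e^{\bar a D}\big(|X|+\bar b\sqrt D\|w\|\big)$, where $\bar a$ and $\bar b$ bound $|A_i+B_iK_i|$ and $|B_i|$. A similar bound from the direct equation gives $|p|\le c(|X|+\|u\|)$. Combining these yields the norm equivalence $c_1(|X|+\|u\|)\le |X|+\|w\|\le c_2(|X|+\|u\|)$. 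Only uniform bounds on the mode matrices are needed here, so no continuity of $\sigma$ is required.

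\textbf{Multiple Lyapunov functional.} Take
\begin{equation*}
V(t)=X^\top P_{\sigma(X)}X+b\int_0^D e^{x}w(x,t)^2dx .
\end{equation*}
Condition (\ref{eq:exact_quadratic_switching}) makes $X^\top P_\sigma X$ continuous across switching surfaces, so $V$ is absolutely continuous and we can differentiate between switches. Within mode $i$, (\ref{mean_delayf_stabilityy}) together with Young's inequality gives
\begin{equation*}
\dot V\le -\tfrac{\lambda_{\min}(Q_i)}{2}|X|^2+\tfrac{2|P_iB_i|^2}{\lambda_{\min}(Q_i)}w(0,t)^2-bw(0,t)^2-b\int_0^De^xw^2dx .
\end{equation*}
Choosing $b$ larger than $\max_i 2|P_iB_i|^2/\lambda_{\min}(Q_i)$ gives $\dot V\le-\mu V$, where $\mu$ is built from $\min_i\lambda_{\min}(Q_i)/\beta_i$ and $1/e^D$. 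The sandwich bounds $\min\alpha_i$ and $\max\beta_i$ then give $V(t)\le V(0)e^{-\mu t}$. The norm equivalence above translates this into (\ref{stability equation}), with $\xi=\mu/2$.

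\textbf{Main obstacle.} The delicate step is rigorously justifying $w_t=w_x$ and the derivative of $V$ almost everywhere. This requires showing that $(x,t)\mapsto K_{\sigma(p(x,t))}p(x,t)$ depends only on $t+x$, which rests on the identity $P(\theta)=X(\theta+D)$ for the closed loop. I would establish this by substituting $\theta\mapsto\theta+\tau$ into (\ref{P_theta}) and using uniqueness from Assumption~\ref{assumption1}. I would then handle the finitely many switching times on bounded intervals by integrating $\dot V$ piecewise and using continuity of $V$.
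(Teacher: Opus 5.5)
Your proposal is correct and follows essentially the same route as the paper. Like the paper, you use three ingredients: the backstepping transformation $W(\theta)=U(\theta)-K_{\sigma(P(\theta))}P(\theta)$ built on the implicit predictor, a Gronwall-based norm equivalence between $(X,U)$ and $(X,W)$, and the multiple Lyapunov functional $X^\top P_{\sigma(X)}X+b\int e^{x}w^2$, kept continuous across switching surfaces by (\ref{eq:exact_quadratic_switching}). Your transport-PDE variables are only a reparametrization of the paper's $\theta\in[t-D,t]$, and your single switched functional is equivalent to the paper's chaining of the $V_i$ at switching times. You also justify $P(\theta)=X(\theta+D)$ explicitly via uniqueness, which the paper uses only implicitly. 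Your rate factor $e^{-D}$ is valid but conservative, since the integral term already decays with rate $1$.
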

The proof of Theorem \ref{Exponential Stability} relies on some lemmas, which are presented next, together with their proofs.

\begin{lemma}[{backstepping transformation}]\label{backstepping transformation}
    The following backstepping transformation, 
    \begin{equation}\label{W_theta}
        W(\theta)=U(\theta) - K_{\sigma{(P(\theta))}} P(\theta), \quad t-D \leq \theta \leq t,
    \end{equation}
    where $P(\theta)$ is obtained from (\ref{P_theta}) for $t-D \leq \theta \leq t$, together with the control law (\ref{controller}), transform system (\ref{1.1}) to the target system 
    \begin{align}
        \dot{X}(t) &= \left(A_{\sigma(X(t))}+B_{\sigma(X(t))}K_{\sigma(X(t))}\right) X(t) \notag \\ &\quad + B_{\sigma(X(t))} W(t-D) \label{Xd_trans} , \\
              W(t) &= 0,\quad t\geq0. \label{W_trans}
    \end{align}

    \begin{proof}
        System (\ref{1.1}) can be written as
        \begin{align}
            \dot{X}(t) &= \left(A_{\sigma(X(t))} + B_{\sigma(X(t))}K_{\sigma(X(t))}\right) X(t) \notag \\
                       &\quad + B_{\sigma(X(t))} \left(U(t-D) -  K_{\sigma(X(t))}X(t)\right). \label{2.16}
        \end{align}
        We now use (\ref{W_theta}). Setting $\theta=t-D$, from (\ref{P_theta}) we get $P(t-D)=X(t)$. Observing (\ref{P_theta}) and (\ref{2.16}), transformation (\ref{W_theta}) maps the closed-loop system consisting of the plant (\ref{1.1}) and the control law (\ref{controller}), to the target system (\ref{Xd_trans}), (\ref{W_trans}).
    \end{proof}
\end{lemma}

\begin{lemma}[{inverse backstepping transformation}]\label{inverse_transformation} 
The inverse backstepping transformation of $W$ is
\begin{equation}\label{inverse_theta}
    U(\theta)=W(\theta) + K_{\sigma{(\Pi(\theta))}} \Pi(\theta),
\end{equation}
where for $t-D \leq \theta \leq t$,
\begin{align}\label{Pi_theta}
    \Pi(\theta) &= X(t) + \int_{t-D}^{\theta}  \left[ \left( A_{\sigma(\Pi(s))}+B_{\sigma(\Pi(s))}K_{\sigma(\Pi(s))} \right) \Pi(s) \right. \notag   \\ & \left. \quad + B_{\sigma(\Pi(s))} W(s) \right] ds .
\end{align}

\begin{proof}
    Let $\theta \in [t-D,t]$. We observe from (\ref{W_theta}) that $U(\theta)=W(\theta) + K_{\sigma{(P(\theta))}} P(\theta)$. Solving the ODE (\ref{Xd_trans}) in a similar implicit way as for the original system it can be shown that $\Pi(\theta)=X(\theta+D)$, where $\Pi(\theta)$ is given from (\ref{Pi_theta}), and it holds that $\Pi(\theta)=P(\theta)$. 
\end{proof}

\end{lemma}

\begin{lemma}[{norm equivalency}]  \label{lemma_u(theta)}  
    For the direct transformation (\ref{W_theta}), the following inequality holds for some positive constant $\nu_1$ 
    \begin{equation}\label{ut_bound}
      |X(t)|^2+ \int_{t-D}^{t}{| W(\theta) |^2 d \theta} \leq \nu_1 \left( |X(t)|^2 + \int_{t-D}^{t}{| U(\theta) |^2 d \theta} \right).
    \end{equation}
    Similarly, for the inverse transformation (\ref{inverse_theta}), it holds for some positive constant $\nu_2$
    \begin{equation}\label{wt_bound}
      |X(t)|^2+ \int_{t-D}^{t}{| U(\theta) |^2 d \theta} \leq \nu_2 \left( |X(t)|^2 + \int_{t-D}^{t}{| W(\theta) |^2 d \theta} \right).
    \end{equation}

\begin{proof}
        From (\ref{W_theta}), for the direct transformation we apply Young's inequality to obtain 
        \begin{align}\label{W_theta_bound}
            \int_{t-D}^{t}{| W(\theta) |^2 d \theta} &\leq 2 \left( \ \int_{t-D}^{t}{ |U(\theta) |^2 d\theta} \right. \notag \\ 
                        &\left. \quad + {M_K}^2  \int_{t-D}^{t}{  \left| { P(\theta)   } \right| ^2 d \theta}  \right),
        \end{align}
        where for any matrix $R_i$, with $R$ being $A,B,K, H,$ we set
        \begin{equation}
         M_R = \max \{|R_0|,|R_1|,\ldots,|R_p|\}, \label{M_A}
        \end{equation}
        with $H_i=A_i + B_i K_i$, $i=1,\ldots,p$. Using the triangle inequality and (\ref{M_A}) in (\ref{P_theta}), it holds that 
        \begin{align}\label{bound_Pth_1}
          \left| P(\theta) \right| \leq \left| X(t) \right| + M_B \int_{t-D}^{t}|U(s)|ds + M_A \int_{t-D}^{\theta}|P(s)|ds.
        \end{align}   
        We can now apply Gronwall's inequality in (\ref{bound_Pth_1}) to obtain 
        \begin{align}\label{bound_Pth_2}
          \left| P(\theta) \right| \leq \left( \left| X(t) \right| + M_B \int_{t-D}^{t}|U(s)ds \right) e^{M_A(\theta-t+D)}.
        \end{align}  
        Applying Young's and Cauchy-Schwartz's inequalities we get from (\ref{bound_Pth_2})
        \begin{align}\label{bound_Pth_4}
          \int_{t-D}^{t}\left| P(\theta) \right|^2 d\theta  &\leq 2 e^{2M_AD} D \notag \\
          &\ \times  \left( \left| X(t) \right|^2 + M_B^2 D \int_{t-D}^{t}|U(\theta)|^2d\theta \right). 
        \end{align} 
        Combining (\ref{W_theta_bound}) and (\ref{bound_Pth_4}) we reach (\ref{ut_bound}),  where 
        \begin{align}
            \nu_{1} &=  \max \left\{ 4 M_{K}^2 D e^{2 M_A D}+1, 4 M_{K}^2 D^2 e^{2 M_A D} M_{B}^2 +2 \right\}.
        \end{align}  
        Analogously, using the inverse transformation from (\ref{inverse_theta}), we can similarly prove (\ref{wt_bound}) via (\ref{Pi_theta}), where
        \begin{align}
            \nu_{2} &=  \max \left\{ 4 M_{K}^2 D e^{2 M_H D}+1, 4 M_{K}^2 D^2 e^{2 M_H D} M_{B}^2 +2 \right\}.
        \end{align}
    \end{proof}
\end{lemma}

\begin{lemma}[{stability of the target system}]\label{trans_stability2}
    Under As- \\ sumptions \ref{assumption1} and \ref{delay-free as}, there exist positive constants $\kappa$, $\mu$, such that for the target system (\ref{Xd_trans}), (\ref{W_trans}), the following holds
    \begin{align}\label{trans_stability_result2}
        \left| X(t) \right|^2 +  \int_{t-D}^{t} W(\theta)^2 d\theta &\leq \kappa \left( |X(0)|^2 + \int_{-D}^{0} W(\theta)^2 d\theta \right)   \notag \\ & \ \times e^{-\mu t}, \quad t \geq 0.
    \end{align}

    \begin{proof}
         According to Lemma \ref{backstepping transformation}, for each subsystem of the family described by the target switched system (\ref{Xd_trans}), (\ref{W_trans}), i.e., for any $i \in \mathcal{P}$, it holds
        \begin{align}
            \dot{X}(t) &= \left(A_{i}+B_{i}K_{i}\right) X(t) + B_{i} W(t-D) \label{Xdp_trans} , \\
                  W(t) &= 0. \label{Wp_trans}
        \end{align} 
        Consider an arbitrary time window $[t_0, t_1)$ such that $X(t) \in \Omega_i$ and $\sigma(X(t)) = i$ for all $t \in [t_0, t_1)$ (such a time interval exists by Remark~\ref{remark} since (\ref{Xdp_trans}), (\ref{Wp_trans}) is equivalent to (\ref{open control loop}) for $t \geq D$, while for $t<D$ it is equivalent to (\ref{open loop dist}) with $d(t)=U_0(t-D)$). Within this interval, the system (\ref{Xdp_trans}), (\ref{Wp_trans}) operates without switchings. Therefore, we can assign to any time $t \in [t_0,t_1)$ the following Lyapunov functional 
    \begin{equation}\label{LyapunovFnc2}
        V_i(t) = X(t)^T P_i X(t) + b \int_{t-D}^{t} e^{(\theta+D-t)}W(\theta)^2 d\theta.
    \end{equation}
     Calculating the derivative of (\ref{LyapunovFnc2}), along the solutions of the target subsystem for $t \in [t_0 , t_1)$, we obtain
    \begin{align}
        \dot{V}_{{i}}(t) &\leq -X(t)^T  Q_{i}  X(t) + B_{i}^T W(t-D)P_{i} X(t) \notag \\ \
        &\quad  + X(t)^T P_{i} B_{i} W(t-D) + b e^D W(t)^2 \notag \\
        &\quad - b W(t-D)^2 - b\int_{t-D}^{t} e^{(\theta+D-t)}W(\theta)^2 d\theta. \label{2dv_t_1}
    \end{align}
    Observing $-X^T Q_{i}  X\leq - \lambda_{\min}({Q_{i}}) |X|^2$, applying \\ Young's inequality, and choosing
    \begin{equation}
        b= \max \limits_{i=1,\ldots,p}\left\{ \frac{ 2 |B_{i}P_{i}|^2 }{ \lambda_{\min}({Q_{i}}) } \right\}, 
    \end{equation}
    we get from (\ref{2dv_t_1}) that
    \begin{align}
    \label{2dv_t_11}
        \dot{V}_{i}(t) &\leq -\frac{1}{2} \lambda_{\min}({Q_{i}}) |X(t)|^2  - b\int_{t-D}^{t} e^{(\theta+D-t)}W(\theta)^2 d\theta.
    \end{align}
    Therefore, we conclude that 
    \begin{equation}\label{final derivative of V2}
        \dot{V}_{i}(t) \leq {-\mu_i }V_{i}(t),\quad t \in [t_0,t_1),
    \end{equation}
    where 
    \begin{align}\label{alpha2}
        \mu_{i} &= \min \left \{ \frac{ \lambda_{\min}(Q_{i}) }{2 \beta_i},1 \right \}.
    \end{align}
     From (\ref{2dv_t_11}) and the comparison principle it follows that
    \begin{equation}\label{Vi_t}
        V_i(t) \leq e^{-\mu t}V_i(t_0),\quad t  \in [t_0, t_{1}),
    \end{equation}
    where 
    \begin{align}
         \mu &= \min \limits_{i=1,\ldots,p} \{\mu_i\}. \label{bita2}
        \end{align}
    Additionally, for every pair of modes \( i, j \), the following holds on the switching boundary \( \Omega_{i,j} = \Omega_{j,i}  \), that is, for  $X(t) \in \Omega_{i,j} $,
    \begin{align}\label{eq:exact_switching1}
        V_j(t) - V_i(t)  &=  X(t)^T (P_j - P_i) X(t) \notag \\ & \ + (b-b)\int_{t-D}^{t} e^{(\theta+D-t)}W(\theta)^2 d\theta \notag \\
                         & =   X(t)^T (P_j - P_i) X(t).
    \end{align}
    Applying (\ref{eq:exact_quadratic_switching}) in (\ref{eq:exact_switching1}) we get that, for  $X(t) \in \Omega_{i,j} $, 
    \begin{align}\label{eq:exact_switching2}
        V_j(t) - V_i(t)  &= 0  .
    \end{align}
    Finally, observing (\ref{LyapunovFnc2}), we have for all $t \in [t_0,t_1)$
        \begin{align}\label{kappas2}
        \kappa_{1}&\left(\left| X(t) \right|^2 + \int_{t-D}^{t} W(\theta)^2 d\theta \right)  \leq V_i(t) \notag \\
                &\leq  \kappa_{2}\left(\left| X(t) \right|^2 + \int_{t-D}^{t} W(\theta)^2 d\theta \right),
    \end{align}
    where 
    \begin{align}
        \kappa_1 &= \min \limits_{i=1,\ldots,p} \{ {\kappa_{1,i}}\}, \
        \kappa_2 = \max \limits_{i=1,\ldots,p}\{\kappa_{2,i}\}, \label{k22}
    \end{align}
    for
    \begin{align}
        \kappa_{1,i} &= \min \left\{\alpha_i, \frac{2|P_iB_i|^2}{\lambda_{\min}(Q_i)}\right\}, \,  \kappa_{2,i} = \max \left\{\beta_i, \frac{2|P_iB_i|^2}{\lambda_{\min}(Q_i)}e^D\right\}. 
    \end{align}
    Exponential stability for the system in the $(X, W)$ variables can now be proved. Let $s_1, s_2, \cdots, s_l$ be any ordered arbitrary switching times in $[0, t)$ with
$ 0 < s_1 < s_2 < \cdots < s_{l-1}< s_l < t $. Then according to (\ref{Vi_t}) and (\ref{eq:exact_switching2}), we have
    \begin{align}\label{Vsigma}
        V_{\sigma(X(t))}(t) &\leq e^{-\mu (t-s_l)}V_{\sigma(X(s_l))}(s_l) \notag \\
                            &\leq e^{-\mu (t-s_{l-1})}V_{\sigma(X(s_{l-1}))}(s_{l-1}) \ldots \notag \\
                            &\leq e^{-\mu t}V_{\sigma(X(0))}(0).
     \end{align}
    Setting $\kappa = {\kappa_2}/{\kappa_1}$, then from (\ref{kappas2}) and (\ref{Vsigma}) we reach (\ref{trans_stability_result2}).

\end{proof}  
    
\end{lemma}

\begin{customproof}[Proof of Theorem \ref{Exponential Stability}. ]
     Now we are able to complete the proof of Theorem~\ref{Exponential Stability}, and hence, conclude on the stability of the original system. Combining (\ref{ut_bound}), (\ref{wt_bound}), and (\ref{kappas2}), with the result in Lemma \ref{trans_stability2}, we get (\ref{stability equation}) where $\rho = \sqrt{\frac{2\kappa_1\nu_1\nu_2}{\kappa_2}}, \
    \xi  = \frac{\mu}{2}.$ 
\end{customproof}

\section{Simulation Example} \label{sec: sims}
We consider the switched system (\ref{1.1}) with two modes as
\begin{align}
    A_1 = \begin{bmatrix}2.5 & -1 \\
1.5 & 1.3\end{bmatrix},\;
    B_1 = \begin{bmatrix}1 \\ 0\end{bmatrix}, \; A_2 = \begin{bmatrix}0.1 & -3 \\
1.7 & 0.17\end{bmatrix},\;
    B_2 = \begin{bmatrix}0 \\ 1\end{bmatrix}.
\end{align}
To design feedback gains $K_i$, we follow the approach of multiple Lyapunov functions (as in~\cite[Sections~3.3 and 3.4]{Liberzon}). Hence, we solve the following set of 
LMIs using the ${S}$-procedure, for each mode $i=1,2$,
\begin{align}
&(A_1 P_1 + B_1 Y_1)^\top + (A_1 P_1 + B_1 Y_1) 
+ \gamma_1 (P_1 - P_2) \prec 0, \\
&(A_2 P_2 + B_2 Y_2)^\top + (A_2 P_2 + B_2 Y_2) 
+ \gamma_2 (P_2 - P_1) \prec 0, \\
&P_i \succ 0, \qquad i=1,2,
\end{align}
where $Y_i = K_i P_i$, which lead to $\gamma_1=\gamma_2=0.0001$, 
\begin{equation}\label{P1,K1}
    P_1 = \begin{bmatrix}2.4764 & -1.0101 \\
-1.0101 & 0.7256\end{bmatrix},\;
    K_1 = \begin{bmatrix} -5.6238 & -6.6582\end{bmatrix},
\end{equation}
\begin{equation}
    P_2 = \begin{bmatrix}1.3121 & 0.6749 \\
0.6749 & 1.3572\end{bmatrix},\;
    K_2 = \begin{bmatrix} 2.4007 & -2.2096\end{bmatrix}.
\end{equation}
We define the following quadratic forms 
\begin{equation}
   E_i(X) = X^\top P_i X, \qquad i=1,2,
\end{equation}
(which depend only on $X$), and the switching law is 
\[
\sigma\bigl(X(t)\bigr) = 
\max_i \{ \arg\max_{i \in \{1,2\}} E_i\bigl(X(t)\bigr)\},
\] which is motivated by applications in communication 
networks \cite{NetworkCom}, \cite{liu2015networked} and more specifically it is inspired by the Try-Once-Discard (TOD) protocol.  Hence, the $\Omega$ sets are
\begin{align}
    \Omega_1 &= \big\{ X \in \mathbb{R}^n ~\big|~ E_1(X) > E_2(X) \big\},  \\
    \Omega_2 &= \big\{ X \in \mathbb{R}^n ~\big|~ E_1(X) \leq E_2(X) \big\}, \\
        \Omega_{12} &= \Omega_{21} = \big\{ X \in \mathbb{R}^n ~\big|~ E_1(X) = E_2(X)\big\}. \label{Omega_ssurface}
\end{align}
With this switching signal, the matrices $K_i, \ P_i$ and the sets $\Omega_i$, defined in (\ref{P1,K1})--(\ref{Omega_ssurface}) satisfy Assumption~\ref{delay-free as}. Furthermore, we introduce a small hysteresis band in the switching rule, to practically avoid chattering behavior at the switching surfaces. We set $D=1$, initial conditions \(X(0)=[2;\,-1]\), $U(s)=0$, for $s \in [-D,0)$, and simulation time is \(T=10s\).

To compute the implicit predictor formula as in (\ref{P_theta}), we apply the left-end point rule iteratively over the delay interval $[t-D,\,t]$.  Denoting $h = \Delta t = 10^{-3}\,\mathrm{s}$ and $N = D / h$, we set for each $t$ with $\hat{P}(t-D) = X(t)$
\begin{equation}
  \begin{aligned}
    &\hat{P}\bigl(t - D + j\,h\bigr) 
    = \hat{P}\bigl(t - D + (j-1)\,h\bigr) \\
    & + {h}\Bigl[
        f\left(\hat{P}(t - D + (j-1)h\right),\,U(t - D + (j-1)h)\Bigr],
  \end{aligned}
\end{equation}
for $j=1, \ldots N$, where 
\[
  f\left(\hat{P},\,U\right)
  \;=\;
  A_{\sigma\left(\hat{P}\right)}\,\hat{P}
  + B_{\sigma\left(\hat{P}\right)}\,U\,.
\]
The implicit predictor 
formula \eqref{P_theta} instead of the semi-explicit expression \eqref{P(t)} is used because here this choice 
is computationally faster. Nevertheless, the accuracy and computational complexity of the implicit and the semi-explicit formulas may depend on the size and structure of the system, the form of the $\Omega_i$ sets, as well as the specific approach employed for actual implementation of the two schemes. Hence, a more detailed study would be needed for providing concrete recommendations on which computational method would be the most efficient.

Fig.~\ref{fig2} illustrates  the evolution of the actual switching signal \(\sigma(X(t))\), with blue segments indicating the active mode and vertical black lines marking the switching instants. Fig.~\ref{fig3} shows the state trajectories and the performance of the controller. Fig.~\ref{fig4} depicts the phase portrait 
of the system trajectories. The switching regions $\Omega_1$ and $\Omega_2$ are depicted in different colors, with the hysteresis band shown in gray.

\begin{figure}[ht!]
    \centering
    \includegraphics[width=8.3 cm]{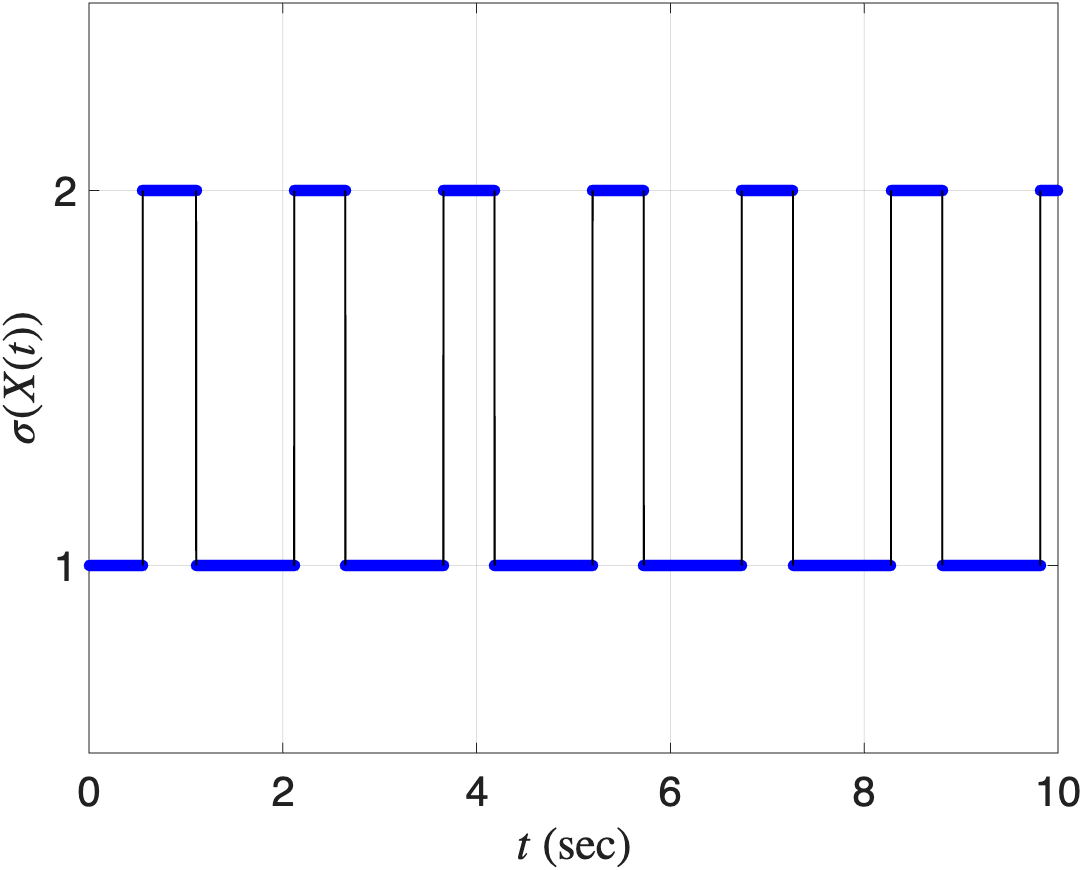}
    \caption{Evolution of the switching signal \(\sigma(X(t))\).
    \label{fig2}}
\end{figure}

\begin{figure}[ht!]
    \centering
    \includegraphics[width=8.3 cm]{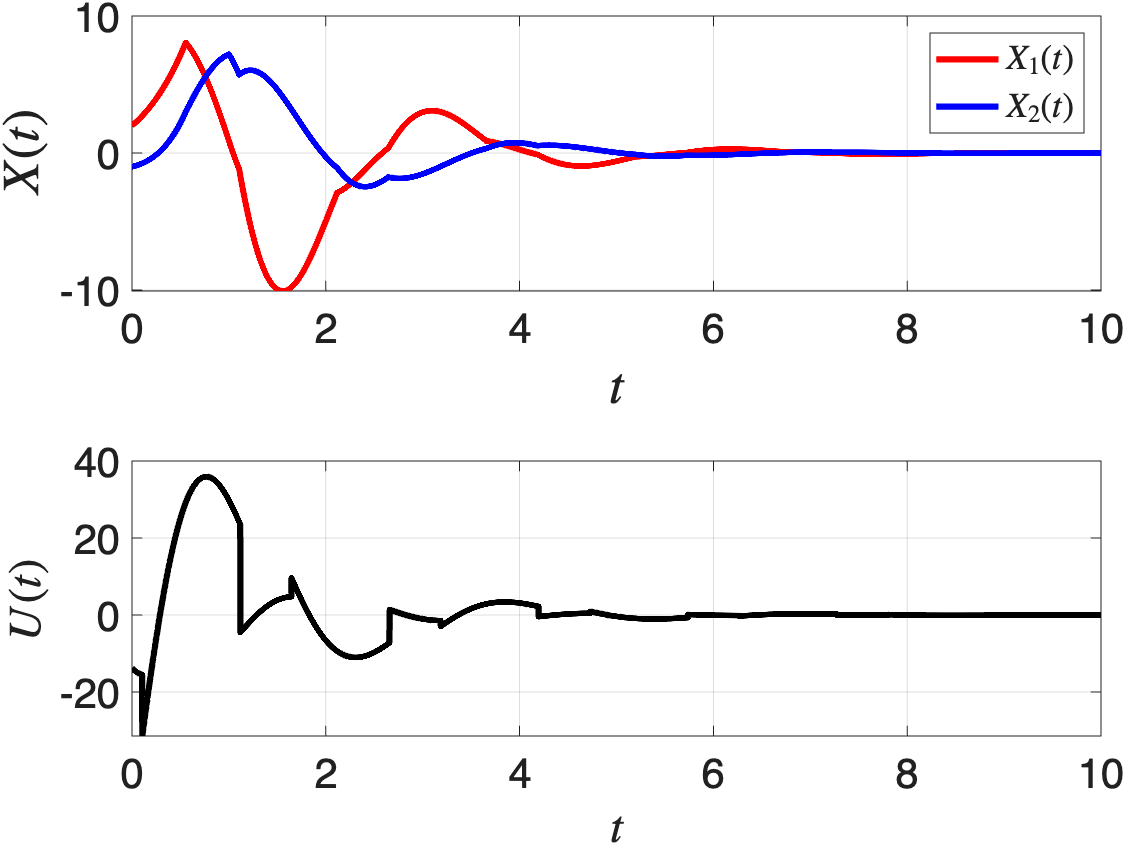}
    \caption{Evolution of state $X(t)$ and control input $U(t)$. 
    \label{fig3}}
\end{figure}

\begin{figure}[ht!]
    \centering
    \includegraphics[width=8.3 cm]{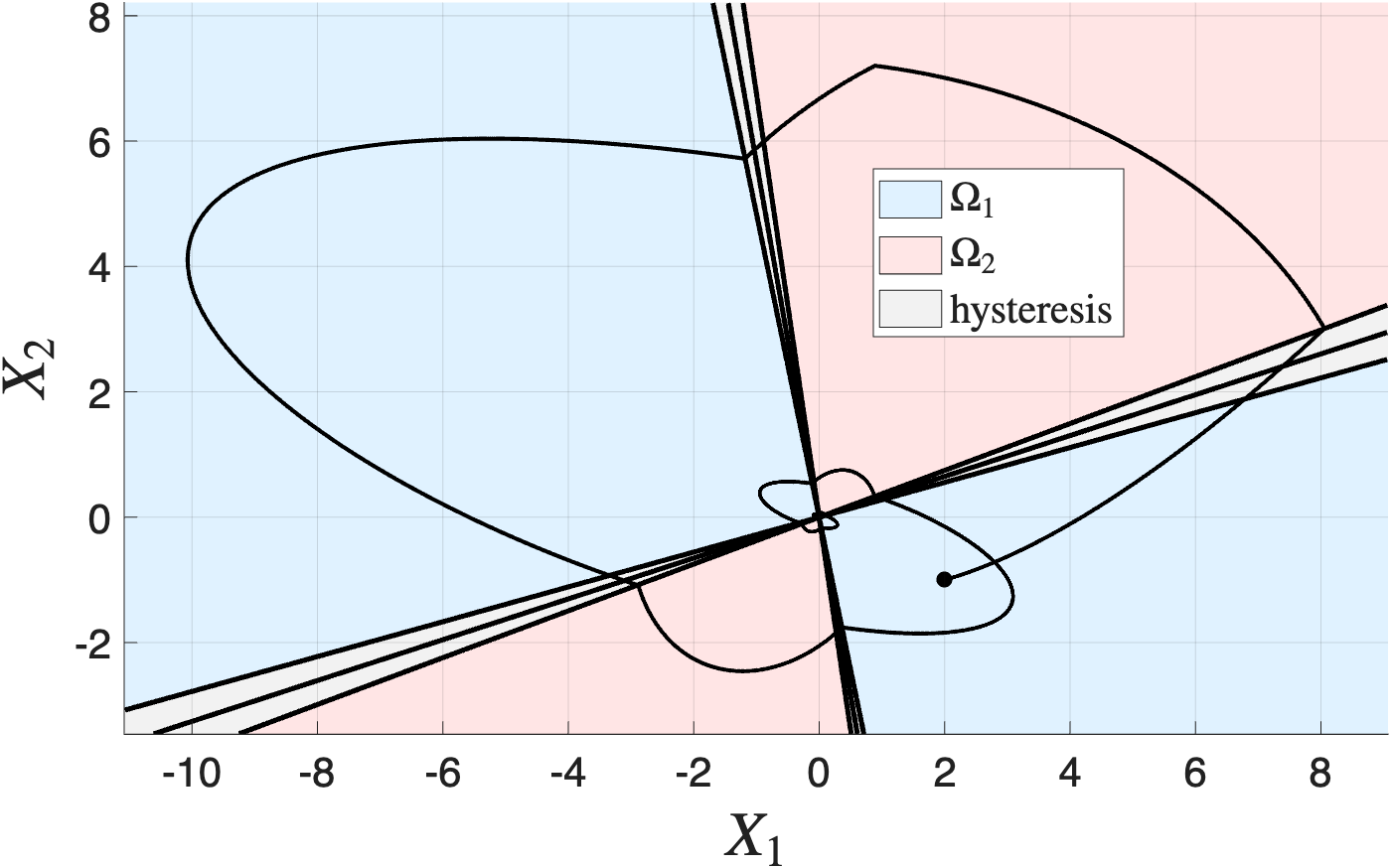}
    \caption{Phase portrait of the system trajectories along with the switching regions $\Omega_1$ (blue), $\Omega_2$ (red), and hysteresis band (gray).}
    \label{fig4}
\end{figure}

\section{Conclusions and Future Works}\label{sec5} 

In this work we developed a predictor-feedback control law for a class of switched linear systems subject to state-dependent switching and arbitrarily long, constant input delay. The main element of our design is an exact predictor state that we constructed and which can be computed in an implicit manner, or using a semi-explicit formula that we provided. We established uniform exponential stability of the closed-loop system constructing multiple Lyapunov functionals via utilization of backstepping. A two-mode example demonstrated the effectiveness of the proposed controller.

As next step we intend to develop a predictor-feedback control design methodology for a larger class of systems with state-dependent switching, adopting a hybrid systems framework for relaxing the necessary restrictions imposed here by the specific form of the switching signal and system considered. We also intend to investigate the computational complexity and accuracy of both the implicit and the semi-explicit implementation of our predictor-feedback control law, towards providing concrete guidelines for its efficient implementation.

\bibliography{IEEEex}      
\bibliographystyle{IEEEtran}

\end{document}